\newcommand{\suchthat}{\;\ifnum\currentgrouptype=16 \middle\fi|\;}
\newtheorem{assm}{Assumption}
\newtheorem{prop}{Proposition}
\newtheorem{lemma}{Lemma}
\newtheorem{cor}{Corollary}
\newtheorem{example}{Example}
\newtheorem{defn}{Definition}
\theoremstyle{remark}
\DeclareMathOperator*{\argmax}{\arg\!\max}
\newcommand{\E}{\mathbb{E}}
\newcommand{\eps}{\varepsilon}
\begin{document}

\title{\large{Injectivity and the Law of Demand}\thanks{I thank Nail Kashaev, Salvador Navarro, John Rehbeck, and David Rivers for helpful comments.}}

\author{ \small{Roy Allen}  \\
    \small{Department of Economics} \\
    \small{University of Western Ontario} \\
    \small{rallen46@uwo.ca}
}
\date{\small{ \today }} 

\maketitle

\begin{abstract}
Establishing that a demand mapping is injective is core first step for a variety of methodologies. When a version of the law of demand holds, global injectivity can be checked by seeing whether the demand mapping is constant over any line segments. When we add the assumption of differentiability, we obtain necessary and sufficient conditions for injectivity that generalize classical \cite{gale1965jacobian} conditions for quasi-definite Jacobians.

\end{abstract}

\newpage

\section{Introduction}

A variety of recently developed methods require, as a first step, that a demand mapping be injective. Examples include work on endogeneity with market level data (\cite*{berry1994estimating}, \cite*{berry1995automobile}, \cite{berry2009nonparametric}, \cite*{chiappori2009nonparametric}, \cite*{berry2014identification}); simultaneous equations models (\cite*{matzkin2008identification}, \cite*{matzkin2015estimation}, \cite*{berry2018identification}); multidimensional heterogeneity in a consumer setting (\cite{blundell2017individual}); index models (\cite*{ahn2017simple}); and nonparametric analysis in trade (\cite{adao2017nonparametric}).\footnote{\cite*{chesher2017generalized} and \cite{bonnet2017yogurts} take an alternative approach, working with inverse images that may be multivalued.}

The applicability of these methods depends on whether the demand mapping is injective. This paper studies injectivity using a shape restriction that allows complementarity: the law of demand.
\begin{defn}
$Q: U \subseteq \mathbb{R}^K \rightarrow \mathbb{R}^K$ satisfies the \textbf{law of demand} if for each $u, \tilde{u} \in U$,
\[
( Q(u) - Q(\tilde{u}) ) \cdot ( u- \tilde{u} ) \geq 0.
\]
\end{defn}
Many models imply a version of the law of demand, both in the standard consumer problem and outside it. In the standard consumer problem, $u$ is the \textit{negative} of the price vector. Quasilinear preferences imply the law of demand.\footnote{\cite{fosgerau2018demand} provide an injectivity result for a demand system that allows complementarity. They consider quasilinear preferences and so their demand system fits into the setup of this paper.} \cite*{hildenbrand1983law} provides conditions under which the law of demand holds in the aggregate, even if it does not hold at the individual level.\footnote{\cite{hildenbrand1983law} also provides sufficient conditions that ensure a strict law of demand $(Q(u) - Q(\tilde{u})) \cdot (u - \tilde{u}) > 0$ for $u \neq \tilde{u}$, which clearly implies injectivity.} Outside the standard consumer problem, the discrete choice additive random utility model (\cite*{mcfadden1981}) also satisfies the law of demand. In that model one may interpret $u_k$ as the deterministic utility index for alternative $k$ and $Q(u)$ as a vector of choice probabilities.

Directly checking whether a demand mapping is injective is nontrivial. This paper provides necessary and sufficient conditions for injectivity that can simplify this task. The simplest condition states that when demand is continuous and the domain of demand is convex, global injectivity of $Q$ can be checked by checking whether $Q$ is constant over line segments. This implies that global and local injectivity are equivalent.

The main result of this paper is a nondifferentiable counterpart to the classical injectivity results of \cite{gale1965jacobian} for functions with weakly quasi-definite Jacobians.\footnote{To be clear, this paper only overlaps when the Jacobian is weakly quasi-definite, not just a $P$ matrix as in Theorem 4 in \cite{gale1965jacobian}.} When I specialize the main result by assuming the demand mapping $Q$ is differentiable, I establish a generalization of \cite{gale1965jacobian}, Theorem 6w. While \cite{gale1965jacobian} impose invertibility of the Jacobian of $Q$ as a \textit{sufficient} condition for global injectivity, I provide a necessary and sufficient condition for local (and global) injectivity in terms of certain directional derivatives.

\cite*{berry2013connected} have recently shown that demand mappings that satisfy a ``connected substitutes'' property are injective. This connected substitutes condition applies to a number of existing models, including models of market shares based on a discrete choice foundation (e.g. \cite{berry2014identification}), but may not apply when there is complementarity.\footnote{\cite{berry2013connected} show in several examples that certain models with complements may be reparametrized to fit into their setup. See also \cite{brown1998estimation} and \cite{beckert2008heterogeneity} for injectivity results that allow complementarity between goods.} This paper complements their analysis by studying injectivity using a shape restriction that allows complementarity without a reparametrization.

The injectivity results of this paper exploit the fact that when $Q$ is continuous and satisfies the law of demand, the inverse image of any quantity is a convex set. This is a classical result in monotone operator theory.\footnote{See e.g. \cite{rock2009variational}.} To my knowledge, this important property has not been exploited for studying injectivity in the econometrics literature, yet it has several implications that I describe further in the paper. The closest precedent appears to be in the study of uniqueness of general equilibrium, where several conditions are known to yield convexity of equilibria (e.g. \cite*{arrow1958stability}, \cite*{arrow1960some}, and the discussion in \cite*{mas1991uniqueness}).
\section{Characterization of Injectivity}
This section presents the main results, which provide necessary and sufficient conditions for a demand mapping to be injective. We use the following assumption, which allows us to reduce checking global injectivity to checking local conditions.
\begin{assm} \label{assm:main}
$Q: U \subseteq \mathbb{R}^K \rightarrow \mathbb{R}^K$ satisfies the law of demand, is continuous, and $U$ is open and convex.
\end{assm}
Recall that a set $U$ is convex if for $u,\tilde{u} \in U$ and any scalar $\alpha \in [0,1]$, it follows that $\alpha u + (1 - \alpha) \tilde{u} \in U$. An important implication of this assumption is that inverse images $Q^{-1}(u)$ are convex, which I formalize below. As discussed in the Introduction, this is a classical result in monotone operator theory.
\begin{lemma} \label{lem:invconv}
Let Assumption~\ref{assm:main} hold. Then for each $y \in \mathbb{R}^K$,
\[
Q^{-1}(y) = \{ u \in U \mid Q(u) = y \}
\]
is convex.
\end{lemma}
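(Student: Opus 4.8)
The plan is to reduce the statement to a single claim: if $Q(u_0) = Q(u_1) = y$, then $Q$ equals $y$ at every point of the segment joining $u_0$ and $u_1$. The cases in which $Q^{-1}(y)$ is empty or a singleton are vacuous, so I would fix distinct $u_0, u_1 \in Q^{-1}(y)$ and, for $\alpha \in (0,1)$, set $u_\alpha = (1-\alpha) u_0 + \alpha u_1$. Convexity of $U$ guarantees $u_\alpha \in U$, so $Q(u_\alpha)$ is well defined, and the goal is to show $Q(u_\alpha) = y$. I would argue by contradiction, supposing $d := Q(u_\alpha) - y \neq 0$.

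The first and easy observation is that the law of demand applied to the collinear triple $u_0, u_\alpha, u_1$ only yields orthogonality: pairing $u_\alpha$ with $u_0$ gives $\alpha\, d \cdot (u_1 - u_0) \ge 0$ and pairing it with $u_1$ gives $-(1-\alpha)\, d \cdot (u_1 - u_0) \ge 0$, so $d \cdot (u_1 - u_0) = 0$ — but this is not enough to force $d = 0$. The substance of the argument is therefore a \emph{transversal} perturbation exploiting openness and continuity. Since $U$ is open, $u_\alpha + s d \in U$ for all $s$ in a small interval $(-\delta, \delta)$; since $Q$ is continuous and $(Q(u_\alpha) - y) \cdot d = d \cdot d > 0$, after shrinking $\delta$ the function $f(s) := (Q(u_\alpha + s d) - y) \cdot d$ satisfies $f(s) > 0$ for all $|s| < \delta$.

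Now I would fix any $s \in (-\delta, 0)$ and write $w = u_\alpha + s d$, then apply the law of demand twice. Pairing $w$ with $u_0$ and using $w - u_0 = \alpha(u_1 - u_0) + s d$ gives $\alpha\, g(s) + s f(s) \ge 0$, where $g(s) := (Q(w) - y) \cdot (u_1 - u_0)$; since $s < 0$ and $f(s) > 0$ we have $s f(s) < 0$, which forces $g(s) > 0$. Pairing $w$ with $u_1$ and using $w - u_1 = -(1-\alpha)(u_1 - u_0) + s d$ gives $-(1-\alpha) g(s) + s f(s) \ge 0$, which forces $g(s) < 0$. This contradiction shows $d = 0$, i.e.\ $Q(u_\alpha) = y$; as $\alpha \in [0,1]$ was arbitrary, $Q^{-1}(y)$ is convex.

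I expect the only genuine obstacle to be recognizing that the monotone inequalities restricted to the segment degenerate to orthogonality, and that one must instead perturb in the direction $d = Q(u_\alpha) - y$ itself and then play the comparison against $u_0$ off the comparison against $u_1$ for small negative $s$, using that $u_\alpha - u_0$ and $u_\alpha - u_1$ are oppositely signed multiples of $u_1 - u_0$. Everything else is routine bookkeeping: openness of $U$ supplies room for the perturbation, and continuity of $Q$ keeps $f$ strictly positive near $s = 0$.
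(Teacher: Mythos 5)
Your proof is correct, but it takes a genuinely different route from the paper, which offers no argument of its own: it simply cites the result as classical monotone operator theory (Rockafellar and Wets for $U = \mathbb{R}^K$, and Kassay et al.\ for general open convex $U$). Your argument is a self-contained, elementary replacement. The two halves fit together exactly as you describe: restricting the law of demand to the collinear triple $u_0, u_\alpha, u_1$ only yields $d \cdot (u_1 - u_0) = 0$ with $d = Q(u_\alpha) - y$, and the real content is the transversal perturbation $w = u_\alpha + s d$ with $s < 0$ small, where openness of $U$ keeps $w$ in the domain, continuity keeps $f(s) = (Q(w) - y)\cdot d$ strictly positive, and the two monotonicity inequalities against $u_0$ and $u_1$ then force $g(s) = (Q(w)-y)\cdot(u_1-u_0)$ to be simultaneously positive and negative. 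I checked the algebra ($w - u_0 = \alpha(u_1-u_0) + sd$, $w - u_1 = -(1-\alpha)(u_1-u_0) + sd$, and the sign of $sf(s)$) and it is right; the choice of \emph{negative} $s$ is essential, since positive $s$ yields no contradiction. What you have effectively done is inline the standard proof that a continuous monotone map on an open convex set is (locally) maximal monotone, specialized to the level set $Q^{-1}(y)$. What the citation buys the paper is brevity and generality; what your argument buys is transparency about exactly where each hypothesis of Assumption~\ref{assm:main} --- continuity, openness, convexity, and the law of demand --- enters, which is pedagogically valuable given that the paper's Example~1 shows continuity cannot be dropped.
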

\begin{proof}
If the domain is $U = \mathbb{R}^K$, this is a textbook result, e.g.
\cite{rock2009variational}, p. 536. When $U \neq \mathbb{R}^K$, this is covered by \cite*{kassay2009convexity}, Theorem 3.5.
\end{proof}
When $K > 1$, continuity cannot be dropped without alternative structure, as the following example illustrates.
\begin{example}
Let $K = 2$, $U = \mathbb{R}^2$, and $A = \{ u \in \mathbb{R}^2 \mid u_1 + u_2 > 0 \text{ or } u_1 = u_2 = 0 \}$. Let $Q(u) = ( 1\{ u \in A\}, 1\{ u \in A \})$, where $1\{ u \in A \}$ is an indicator function for whether $u \in A$. To show the law of demand is satisfied, note that if $u,\tilde{u}$ are either both in $A$ or both in its complement $A^c$, $Q$ does not vary and clearly satisfies the law of demand. Consider then $u \in A, \tilde{u} \in A^c$. Then we have
\[
(Q(u) - Q(\tilde{u}))'(u - \tilde{u}) = (u_1 - \tilde{u}_1) + (u_2 - \tilde{u}_2) \geq 0,
\]
Nonetheless, $Q^{-1}(0,0) = A^c$ is not convex, since both points $(-1,1), (1,-1)$ are in $A^c$, yet their convex combination $(0,0)$ is not.
\end{example}

Using Lemma~\ref{lem:invconv}, we obtain the following list of conditions that are equivalent to global injectivity of $Q$.
\begin{prop} \label{cor:three}
Let Assumption~\ref{assm:main} hold. Then the following are equivalent:
\begin{enumerate}[i.]
    \item $Q$ is injective, i.e. for each $y \in \mathbb{R}^K$ there is at most one $u \in U$ such that $Q(u) = y$.
    \item $Q$ is locally injective, i.e. for each $u \in U$ there is a neighborhood $H \subseteq \mathbb{R}^K$ of $u$ such that the restriction of $Q$ to $H$ is injective.
    \item The only line segments in $U$ along which $Q$ is constant are singleton points.
\end{enumerate}
\end{prop}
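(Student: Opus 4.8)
The plan is to establish the cycle of implications (i) $\Rightarrow$ (ii) $\Rightarrow$ (iii) $\Rightarrow$ (i), which gives the full equivalence. Of these, only (iii) $\Rightarrow$ (i) has real content, and it is exactly where Lemma~\ref{lem:invconv} is used; the other two steps are topological bookkeeping. The upshot worth flagging is that the combined chain shows local and global injectivity coincide under Assumption~\ref{assm:main}.

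The implication (i) $\Rightarrow$ (ii) is immediate: if $Q$ is injective on all of $U$, its restriction to any neighborhood of any point is injective, so one may simply take $H = U$. For (ii) $\Rightarrow$ (iii) I would argue by contraposition. Suppose (iii) fails, so there is a nondegenerate segment $[a,b] \subseteq U$ with $a \neq b$ on which $Q$ is constant. Pick $c$ in the relative interior of this segment, say $c = \tfrac12(a+b)$. For all sufficiently small $\eps > 0$ the two points $c \pm \eps(b-a)$ lie on $[a,b]$, are distinct, and satisfy $Q(c+\eps(b-a)) = Q(c-\eps(b-a))$. Hence every neighborhood of $c$ contains two distinct points with the same image, so no neighborhood $H$ of $c$ makes $Q|_H$ injective, which is precisely the negation of (ii).

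The remaining implication (iii) $\Rightarrow$ (i) is again by contraposition and is where Lemma~\ref{lem:invconv} does the work. If $Q$ is not injective, there exist $u \neq \tilde u$ in $U$ with $Q(u) = Q(\tilde u) =: y$. By Lemma~\ref{lem:invconv} the fiber $Q^{-1}(y)$ is convex, so it contains the entire segment $[u, \tilde u]$; this is a nondegenerate segment in $U$ along which $Q$ is constant, contradicting (iii). I expect the only place requiring care is the bookkeeping in (ii) $\Rightarrow$ (iii) — checking that the perturbed points remain both in $U$ (automatic since they lie on the segment $[a,b] \subseteq U$) and strictly inside the segment — because convexity of the fibers turns the substantive step (iii) $\Rightarrow$ (i) into essentially a one-line argument.
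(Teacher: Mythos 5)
Your proposal is correct and follows the same route as the paper: the implications (i) $\Rightarrow$ (ii) $\Rightarrow$ (iii) are elementary, and (iii) $\Rightarrow$ (i) is exactly the paper's argument via the convexity of the fibers from Lemma~\ref{lem:invconv}. You simply spell out the (ii) $\Rightarrow$ (iii) step that the paper dismisses as ``clear,'' which is fine.
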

\begin{proof}
Clearly, (i) $\implies$ (ii) $\implies$ (iii). That (iii) $\implies$ (i) follows from Lemma~\ref{lem:invconv} and the definition of convexity. Indeed, if $Q(u) = Q(\tilde{u})$, then the set $Q^{-1}(Q(u))$ is convex and must contain $u$ and $\tilde{u}$. In particular, $Q$ is constant on any line segment joining $u$ and $\tilde{u}$. By assumption, this is only possible if $u = \tilde{u}$.
\end{proof}
Local injectivity always implies condition (iii), but in general the reverse is not true. An example of a multivariate function that satisfies (iii) but is not locally injective is $Q(u_1,u_2) = (u_1^3 - u_2, u_1^3 - u_2)$. The equivalence of (i) and (iii) is the most powerful part of Proposition~\ref{cor:three}, since part (iii) is often easy to check. In addition, we can leverage this equivalence to relax the domain restrictions. We formalize this as follows.
\begin{cor} \label{cor:convex}
Let Assumption~\ref{assm:main} hold. Let $Q_H : H \subseteq U \rightarrow \mathbb{R}^K$ denote the restriction of $Q$ to the set $H$. If $H$ is open or convex, any of the injectivity conditions of Proposition~\ref{cor:three} are equivalent when applied to the function $Q_H$.
\end{cor}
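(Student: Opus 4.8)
The plan is to deduce the statement from Proposition~\ref{cor:three} and Lemma~\ref{lem:invconv} applied to $Q$ on $U$ --- which does satisfy Assumption~\ref{assm:main} --- rather than trying to verify Assumption~\ref{assm:main} for $Q_H$ itself, since $H$ need not be both open and convex, and even when $H$ is convex but not open Lemma~\ref{lem:invconv} is not directly available for $Q_H$.

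First I would observe that the implications $(\mathrm{i})\Rightarrow(\mathrm{ii})\Rightarrow(\mathrm{iii})$, read for $Q_H$, hold for \emph{any} function $H\to\mathbb{R}^K$ with no use of continuity or the law of demand: a restriction of an injective map is injective, and if $Q_H$ were constant on a nondegenerate segment $[a,b]\subseteq H$ then no neighborhood of the midpoint $m\in[a,b]\subseteq H$ could be injective (it meets $[a,b]$ in points other than $m$ with the same image). So the only thing left to prove is $(\mathrm{iii})\Rightarrow(\mathrm{i})$ for $Q_H$, which I would establish in contrapositive form.

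I would suppose $Q_H(u)=Q_H(\tilde u)$ with $u\neq\tilde u$ and $u,\tilde u\in H\subseteq U$, and apply Lemma~\ref{lem:invconv} to $Q$: the set $Q^{-1}(Q(u))$ is convex and contained in $U$, and since it contains $u$ and $\tilde u$, the whole segment $[u,\tilde u]$ lies in $U$ and $Q$ is constant on it. Then, because $u\in H$ and $H$ is open or convex, there is $\eps\in(0,1]$ with the nondegenerate segment $S=\{\,u+t(\tilde u-u)\mid 0\le t\le\eps\,\}\subseteq H$ --- take $\eps=1$ when $H$ is convex, and $\eps$ small when $H$ is open. Since $S\subseteq[u,\tilde u]$, the map $Q$, hence $Q_H$, is constant on $S$, so $Q_H$ fails $(\mathrm{iii})$; this yields $\neg(\mathrm{i})\Rightarrow\neg(\mathrm{iii})$, i.e.\ $(\mathrm{iii})\Rightarrow(\mathrm{i})$, and together with the trivial chain all three conditions are equivalent for $Q_H$. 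The one point requiring care --- the main, if minor, obstacle --- is exactly this last step: when $H$ is merely open, $Q_H^{-1}(y)=Q^{-1}(y)\cap H$ need not be convex, so one cannot quote Proposition~\ref{cor:three} verbatim for $Q_H$; passing to the short initial sub-segment $S$, which is all that condition $(\mathrm{iii})$ involves, sidesteps this.
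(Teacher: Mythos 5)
Your proof is correct and follows essentially the same route as the paper: both apply Lemma~\ref{lem:invconv} to the unrestricted map $Q$ on $U$ to get constancy on the full segment $[u,\tilde u]$, then use openness or convexity of $H$ to extract a nondegenerate subsegment inside $H$ on which $Q_H$ is constant, contradicting (iii). The only cosmetic difference is that the paper exhibits two such subsegments (one at each endpoint) where you use just the one at $u$, which suffices.
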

\begin{proof}
It is clear that (i) $\implies$ (ii) $\implies$ (iii). It remains to show that (iii) $\implies$ (i).

Let $Q_H(u) = Q_H(\tilde{u})$ for $u, \tilde{u} \in H$. Let $T \subseteq U$ be the line segment from $u$ to $\tilde{u}$. From Lemma~\ref{lem:invconv}, we conclude that $Q^{-1}(Q_H(u))$ contains $T$, and hence $Q^{-1}_H (Q_H(u))$ contains $T \cap H$. In particular, $Q_H$ is constant over $T \cap H$.

Suppose for the purpose of contradiction that $u \neq \tilde{u}$. If $H$ is either open or convex, the set $T \cap H$ contains two distinct line segments that are not points, beginning at $u$ and $\tilde{u}$, respectively. Since we assumed (iii) holds, we reach a contradiction since $Q_H$ is constant over $T \cap H$, and thus is constant over these line segments.
\end{proof}

Note that this proposition assumes $Q$ satisfies the law of demand over the entire set $U$. This assumption may be satisfied by appealing to economic theory. It allows one to show injectivity for the restriction $Q_H$ for a set $H$ that is either open \textit{or} convex. The primary reason one may be interested in such restrictions is that one may only have information on $Q$ over a certain region of utility indices (such as $H$). 

The proof of Corollary~\ref{cor:convex} establishes the following additional result.
\begin{cor} \label{cor:localone}
Let Assumption~\ref{assm:main} hold. Let $Q_H : H \subseteq U \rightarrow \mathbb{R}^K$ denote the restriction of $Q$ to the set $H$. If $H$ is open or convex, the following are equivalent for arbitrary $u \in H$:
\begin{enumerate}[i.]
    \item For any $u, \tilde{u} \in H$ with $u \neq \tilde{u}$, $Q_H(u) \neq Q_H(\tilde{u})$, i.e. the inverse image $Q_H^{-1}(Q_H(u))$ is a singleton.
    \item $Q_H$ is locally injective at $u$, i.e. there is a neighborhood $N \subseteq H$ of $u$ such that the restriction of $Q_H$ to $N$ is injective.
    \item The only line segment in $H$ that contains $u$ and and over which $Q_H$ is constant is the point $u$.
\end{enumerate}
\end{cor}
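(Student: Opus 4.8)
The plan is to recognize that Corollary~\ref{cor:localone} is simply the \emph{pointwise} or \emph{local} version of Proposition~\ref{cor:three} combined with Corollary~\ref{cor:convex}, so essentially all the work has already been done; the statement in the remark preceding it (``The proof of Corollary~\ref{cor:convex} establishes the following additional result'') is the key hint. First I would observe that the implications (i) $\implies$ (ii) $\implies$ (iii) are immediate for the same trivial reasons as in the previous results: if the fiber through $u$ is a singleton, then no nondegenerate line segment through $u$ can have constant $Q_H$ (that would force another point into the fiber), and local injectivity at $u$ likewise rules out such a segment restricted to a small neighborhood, hence also the nondegenerate subsegment it contains. So the only content is (iii) $\implies$ (i).

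For (iii) $\implies$ (i), I would reproduce verbatim the argument in the proof of Corollary~\ref{cor:convex}, but now holding $u$ fixed. Suppose $Q_H(u) = Q_H(\tilde u)$ for some $\tilde u \in H$ with $\tilde u \neq u$. Let $T \subseteq U$ be the line segment joining $u$ and $\tilde u$. By Lemma~\ref{lem:invconv} (which applies because $Q$ satisfies the law of demand on all of $U$, which is open and convex), $Q^{-1}(Q_H(u))$ is convex and contains both endpoints, hence contains $T$; therefore $Q_H$ is constant on $T \cap H$. Now the topological step: since $H$ is open or convex and $u \in H$, the set $T \cap H$ contains a nondegenerate line segment with endpoint $u$ — if $H$ is convex then $T \cap H = T$ itself, and if $H$ is open then $T \cap H$ contains a small sub-segment of $T$ emanating from $u$ (because $\tilde u \neq u$ guarantees $T$ is genuinely one-dimensional near $u$). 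This nondegenerate segment through $u$ on which $Q_H$ is constant contradicts (iii). Hence $\tilde u = u$, i.e. the fiber $Q_H^{-1}(Q_H(u))$ is a singleton.

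The one place I would be slightly careful is making the openness case airtight: when $H$ is open and $u \in H$, I need $T \cap H$ to contain more than the point $u$, which requires $T$ to be nondegenerate, i.e. $u \neq \tilde u$ — precisely the case under consideration — after which openness of $H$ gives a relatively open neighborhood of $u$ in $T$ that is a nondegenerate segment. I expect this to be the main (and only real) obstacle, and it is minor. Accordingly the proof is essentially a one-paragraph pointwise restatement, and I would phrase it as: (i) $\implies$ (ii) $\implies$ (iii) is clear, and (iii) $\implies$ (i) follows by the argument in the proof of Corollary~\ref{cor:convex} applied with $u$ held fixed.
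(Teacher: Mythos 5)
Your proposal is correct and matches the paper's intended argument exactly: the paper gives no separate proof, stating only that the proof of Corollary~\ref{cor:convex} establishes this result, and your pointwise restatement (using the nondegenerate subsegment of $T \cap H$ emanating from the fixed $u$, which exists whether $H$ is open or convex) is precisely that argument. No gaps.
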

The equivalence of (i) and (ii) shows that to check whether an inverse image is a singleton, it is enough to check features of the mapping $Q_H$ that are local to a \textit{single} $u$. Importantly, this differs from Proposition~\ref{cor:three} and Corollary~\ref{cor:convex}, which instead study how local injectivity holding \textit{for each} $u$ implies global injectivity. Thus, Corollary~\ref{cor:localone} further highlights the sense in which injectivity can be reduced to a local condition. It also conceptually differs from classical papers such as \cite{gale1965jacobian}, which focus on when a local condition holding \textit{everywhere} implies global injectivity.

Finally, to see that these equivalences do not hold in general, consider $H = U = \mathbb{R}$ and $Q(u) = u^2$, which is continuous yet violates the law of demand. Then $Q_H$ is locally injective at $u = 1$, but $Q_H^{-1}(Q_H(1)) = \{ -1, 1 \}$. 
\section{Relationship to \cite*{gale1965jacobian}}
In this section I describe how the local-to-global injectivity result of Proposition~\ref{cor:three} may be seen as a nondifferentiable version of a classical result due to \cite*{gale1965jacobian}. In drawing this relationship, I present a new result complementing their results for weakly quasi-definite Jacobians, which drops the requirement that the function have a convex domain.

I now add the assumption that $Q$ is differentiable.
\begin{assm} \label{assm:diff}
The function $Q: U \subseteq \mathbb{R}^K \rightarrow \mathbb{R}^K$ is differentiable, where $U$ is an open, convex set.
\end{assm}

To relate to \cite*{gale1965jacobian}, I introduce some definitions.
\begin{defn}
A $K \times K$ matrix $B$ is positive semi-definite if $\lambda' B \lambda \geq 0$ for every $\lambda \in \mathbb{R}^K$. If $\lambda' B \lambda > 0$ for every nonzero $\lambda$, then $B$ is positive definite.
\end{defn}
\begin{defn}
A $K \times $K matrix $B$ is weakly quasi-definite if $(B + B')/2$ is positive semi-definite. If $(B + B')/2$ is positive definite, then $B$ is quasi-definite.
\end{defn}
The following result connects the law of demand and quasi-definiteness of Jacobians.
\begin{lemma} \label{lem:definitelaw}
Let Assumption~\ref{assm:diff} hold. The function $Q$ satisfies the law of demand if and only if its Jacobian is everywhere weakly quasi-definite.
\end{lemma}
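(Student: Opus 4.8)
The plan is to prove both implications by passing between the quadratic form $\lambda' DQ(u)\lambda$ and the inner product $(Q(u) - Q(\tilde u))\cdot(u - \tilde u)$ along line segments in $U$, using that $U$ is both convex and open (Assumption~\ref{assm:diff}).

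For the ``if'' direction I would fix $u,\tilde u \in U$, invoke convexity to place the segment $u(t) = \tilde u + t(u - \tilde u)$, $t\in[0,1]$, inside $U$, and study the scalar function $\phi(t) = Q(u(t))\cdot(u - \tilde u)$. By the chain rule, $\phi'(t) = (u - \tilde u)' DQ(u(t))(u - \tilde u)$, which is $\geq 0$ by weak quasi-definiteness of the Jacobian applied with the particular vector $\lambda = u - \tilde u$. Hence $\phi$ is nondecreasing, and $\phi(1) \geq \phi(0)$ is exactly the law of demand for this pair; since the pair was arbitrary, $Q$ satisfies the law of demand.

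For the ``only if'' direction I would fix $u\in U$ and an arbitrary $\lambda\in\mathbb{R}^K$, use openness of $U$ to ensure $u + t\lambda \in U$ for all sufficiently small $t>0$, apply the law of demand to the pair $(u+t\lambda,\, u)$ to get $(Q(u+t\lambda) - Q(u))\cdot(t\lambda) \geq 0$, and divide by $t>0$ to obtain $\tfrac{1}{t}(Q(u+t\lambda) - Q(u))\cdot\lambda \geq 0$. Letting $t\downarrow 0$ and using differentiability of $Q$ at $u$, the difference quotient converges to $(DQ(u)\lambda)\cdot\lambda = \lambda' DQ(u)\lambda$, so this quantity is $\geq 0$ for every $\lambda$. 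Since $\lambda' DQ(u)\lambda = \lambda'\tfrac{DQ(u)+DQ(u)'}{2}\lambda$, the symmetric part of $DQ(u)$ is positive semi-definite, i.e.\ $DQ(u)$ is weakly quasi-definite; and $u$ was arbitrary.

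I do not expect a serious obstacle. The only points needing care are the appeal to the mean value theorem to conclude that a differentiable function with nonnegative derivative is nondecreasing (legitimate since $\phi$ is differentiable on an open interval containing $[0,1]$, as $U$ is open), and the passage to the limit in the difference quotient, which is simply the definition of the directional derivative $DQ(u)\lambda$ and is the one place where differentiability of $Q$, rather than mere continuity, is used.
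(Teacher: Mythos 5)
Your proof is correct. The paper does not supply its own argument for this lemma---it simply cites \cite{parthasarathy2006global}, p.~92---and the two-directional argument you give (integrating the quadratic form along a segment for sufficiency, and taking the limit of the law-of-demand difference quotient for necessity) is precisely the standard proof found in that reference, with the relevant care about using the mean value theorem rather than continuity of the derivative correctly noted.
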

\begin{proof}
See e.g. \cite*{parthasarathy2006global}, p. 92.
\end{proof}
With this lemma and the previous results, we obtain a generalization of \cite*{gale1965jacobian}, Theorem 6w.
\begin{prop} \label{prop:galecomp}
Let Assumption~\ref{assm:diff} hold and suppose $Q$ satisfies the law of demand. Let $H \subseteq U$ be an open set and let $Q_H$ denote the restriction of $Q$ to $H$. Then $Q_H$ is injective if its Jacobian is everywhere invertible.
\end{prop}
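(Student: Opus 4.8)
The plan is to reduce Proposition~\ref{prop:galecomp} to Corollary~\ref{cor:localone} (or equivalently the open-set case of Corollary~\ref{cor:convex}), so that the only real work is to show that everywhere-invertibility of the Jacobian rules out condition (iii) failing --- i.e.\ rules out nonconstant line segments in $H$ along which $Q_H$ is constant. Since $H$ is open and $Q$ satisfies the law of demand on all of $U \supseteq H$, Assumption~\ref{assm:main} holds for $Q$ on $U$, and Corollary~\ref{cor:convex} tells us that injectivity of $Q_H$ is equivalent to: no line segment in $H$ with distinct endpoints has $Q_H$ constant on it. So it suffices to establish that contrapositive: if $Q$ were constant on a nondegenerate segment $T \subseteq H$, then the Jacobian $DQ$ would be singular somewhere on $T$.

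The key step is this last implication, and it is essentially the mean value / directional-derivative argument. Write $T = \{ u_0 + t v : t \in [0,1] \}$ for some direction $v \neq 0$, with $T \subseteq H$. If $Q$ is constant on $T$, then for every $t \in [0,1]$ the one-sided derivative of $s \mapsto Q(u_0 + s v)$ vanishes, and by differentiability of $Q$ this derivative equals $DQ(u_0 + t v)\, v$. Hence $DQ(u_0 + tv)\, v = 0$ for all $t$, so in particular $DQ(u_0)$ annihilates the nonzero vector $v$ and is therefore singular --- contradicting the hypothesis that the Jacobian is everywhere invertible on $H$ (note $u_0 \in T \subseteq H$). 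Reversing the logic: everywhere-invertibility of $DQ$ on $H$ forces condition (iii) of Corollary~\ref{cor:convex} (applied to $Q_H$), which is equivalent to injectivity of $Q_H$. I should also remark on how this relates to \cite{gale1965jacobian}, Theorem~6w: their result additionally requires $H$ to be convex, whereas here convexity is not needed because the law of demand on the larger set $U$ is doing that work through Lemma~\ref{lem:invconv}.

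I do not expect a genuine obstacle here; the argument is short. The one point that deserves care is the direction of the equivalence being invoked: Corollary~\ref{cor:convex} requires $Q$ to satisfy the law of demand on the whole convex set $U$, not merely on $H$, and this is exactly what the statement of Proposition~\ref{prop:galecomp} grants. A second minor point is that ``$DQ$ invertible everywhere'' must be read as invertible at every point of $H$ (the relevant points $u_0$ all lie in $H$), so no issue arises from the Jacobian being defined only on $U$. With those observations in place, the proof is just: assemble Assumption~\ref{assm:main} for $Q$, invoke Corollary~\ref{cor:convex} to replace injectivity of $Q_H$ by the no-constant-segment condition, and finish with the one-line contrapositive that a constant segment produces a null vector of the Jacobian.
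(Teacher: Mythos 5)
Your proposal is correct and follows essentially the same route as the paper: reduce global injectivity of $Q_H$ to a local/no-constant-segment condition via Corollary~\ref{cor:convex}, then use invertibility of the Jacobian to rule out nondegenerate segments on which $Q$ is constant. The only cosmetic difference is that the paper delegates the second step to Proposition~\ref{prop:diff} (whose equivalence of (iii) and (iv) is exactly your mean-value/directional-derivative computation, $DQ(u_0+tv)v=0$ forcing singularity), whereas you inline that argument directly.
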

\begin{proof}
From Corollary~\ref{cor:convex} we see it is enough to establish local injectivity of $Q_H$. Since the Jacobian of $Q_H$ is everywhere invertible, $Q_H$ is locally injective by Proposition~\ref{prop:diff} because its directional derivatives are never zero.\footnote{Note that if we had assumed $Q$ is \textit{continuously} differentiable, we could apply the classical inverse function theorem to establish local injectivity. We have not assumed the derivative is continuous, and hence we use an alternative technique.}
\end{proof}
\cite*{gale1965jacobian} prove this result when $H = U$, i.e. over convex domains.\footnote{Theorem 6 in that paper requires instead that $U$ be convex and drops openness, but instead also requires that the Jacobian be positive quasi-definite (not just weakly quasi-definite, which is implied by the law of demand).} The link between the law of demand and results in \cite*{gale1965jacobian} has previously been noted (\cite*{kassay2009convexity}, \cite*{laszlo2016injectivity}). Proposition~\ref{prop:galecomp} is new to my knowledge, and handles the important case of checking injectivity over a nonconvex set.

\section{Relaxing the Jacobian Condition} \label{sec:inverse}
The previous section uses the invertibility of the Jacobian of $Q$ as a sufficient condition for local injectivity of $Q$. Invertibility of the Jacobian is not necessary for local injectivity, as illustrated for $K = 1$ by $Q(u) = u^3$, since the derivative is $0$ at $0$. 

Obtaining \textit{necessary and sufficient} conditions for local injectivity in terms of derivatives is nontrivial in general. For $Q$ satisfying the law of demand, by Proposition~\ref{cor:three}(iii), however, checking global or local injectivity is equivalent to checking whether $Q$ is constant over any line segment that is not a point. One may write this condition in terms of certain directional derivatives. To that end, define the directional derivative of $Q$ at $u$ in direction $v$, denoted $Q'(u,v)$, by
\[
\left| \lim_{\lambda \downarrow 0} \frac{Q(u + \lambda v) - Q(u)}{\lambda } - Q'(u,v) \right| = 0
\]
whenever this limit exists.
\begin{prop} \label{prop:diff}
Let Assumption~\ref{assm:diff} hold and suppose $Q$ satisfies the law of demand. The following are equivalent for arbitrary $u \in U$:
\begin{enumerate}[i.]
    \item For any $\tilde{u} \in U$ with $u \neq \tilde{u}$, $Q(u) \neq Q(\tilde{u})$.
    \item $Q$ is locally injective at $u$.
    \item The only line segment in $U$ that contains $u$ and and over which $Q$ is constant is the point $u$.
    \item There are no nonzero vectors $v \in \mathbb{R}^K$ such that for all $\lambda \in [0,1]$ satisfying $u + \lambda v \in U$, $Q'(u + \lambda v, v)$ is the zero vector.
\end{enumerate}
\end{prop}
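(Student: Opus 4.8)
The plan is to establish the chain of equivalences by combining the already-proven Corollary~\ref{cor:localone} (applied with $H = U$, which is open and convex) with a new, purely local argument linking condition (iii) to the directional-derivative condition (iv). Since Corollary~\ref{cor:localone} already gives (i)~$\Leftrightarrow$~(ii)~$\Leftrightarrow$~(iii) under Assumption~\ref{assm:main} — and Assumption~\ref{assm:diff} together with the law of demand implies Assumption~\ref{assm:main}, because differentiability implies continuity — the entire burden is to show (iii)~$\Leftrightarrow$~(iv). I would prove this as the contrapositive equivalence: $Q$ \emph{is} constant on some nondegenerate segment through $u$ if and only if there \emph{is} a nonzero $v$ with $Q'(u+\lambda v, v) = 0$ for all admissible $\lambda \in [0,1]$.

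For the direction (iii) fails $\Rightarrow$ (iv) fails: if $Q$ is constant on a segment $\{u + \lambda v : \lambda \in [0,1]\} \subseteq U$ with $v \neq 0$, then for any point $u + \lambda_0 v$ on that segment and small $t > 0$, $Q(u + \lambda_0 v + t v) = Q(u + (\lambda_0 + t) v) = Q(u + \lambda_0 v)$, so the difference quotient defining $Q'(u + \lambda_0 v, v)$ is identically zero, hence the directional derivative exists and equals the zero vector. (Differentiability of $Q$ guarantees the limit exists; in fact constancy along the line makes it trivially zero.) This direction is routine.

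The substantive direction is (iv) fails $\Rightarrow$ (iii) fails: suppose $v \neq 0$ satisfies $Q'(u + \lambda v, v) = 0$ for all $\lambda \in [0,1]$ with $u + \lambda v \in U$; I must produce a nondegenerate segment on which $Q$ is constant. Define $\phi(\lambda) = Q(u + \lambda v)$ for $\lambda$ in the (relatively open) interval $I = \{\lambda : u + \lambda v \in U\}$, which contains $0$. By differentiability of $Q$ and the chain rule, $\phi$ is differentiable on $I$ with $\phi'(\lambda) = DQ(u+\lambda v)\, v = Q'(u+\lambda v, v)$, which is zero on $I \cap [0,1]$ by hypothesis. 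Since $I \cap [0,1]$ is an interval containing $0$ (with nonempty interior, as $U$ is open), a function with identically zero derivative on an interval is constant there; hence $Q$ is constant on the segment $\{u + \lambda v : \lambda \in I \cap [0,1]\}$, which is nondegenerate, contradicting (iii). The one point needing care is matching the one-sided directional derivative $Q'(u+\lambda v, v)$ with the two-sided derivative $\phi'(\lambda)$: under Assumption~\ref{assm:diff} the full (Fr\'echet) derivative $DQ$ exists, so $Q'(u+\lambda v, v) = DQ(u+\lambda v) v$ holds as a genuine two-sided derivative, and the elementary "zero derivative implies constant" fact (via the mean value theorem applied coordinatewise to $\phi$) then closes the argument.

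The main obstacle is essentially bookkeeping rather than depth: one must be careful that the hypothesis in (iv) is stated only for $\lambda \in [0,1]$, so the segment produced is $\{u + \lambda v : \lambda \in I \cap [0,1]\}$ and one should check it has more than one point — which follows since $0 \in \operatorname{int} I$ because $U$ is open, so $I \cap [0,1]$ contains a neighborhood of $0$ relative to $[0,1]$. Rescaling $v$ if necessary lets one assume $u + v \in U$, streamlining the statement. No deep tools beyond the chain rule, the mean value theorem, and the previously established Corollary~\ref{cor:localone} are required.
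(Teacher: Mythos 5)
Your proposal is correct and matches the paper's (very terse) proof: the paper likewise obtains (i)--(iii) from Corollary~\ref{cor:localone} (noting differentiability implies continuity, so Assumption~\ref{assm:main} holds) and attributes (iii)~$\Leftrightarrow$~(iv) to the mean value theorem, which is exactly the argument you spell out via $\phi(\lambda)=Q(u+\lambda v)$. Your expansion is a faithful filling-in of those details, including the right care at the endpoint where the one-sided difference quotient must be reconciled with the Fr\'echet derivative.
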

\begin{proof}
Since $Q$ is differentiable, it is continuous. The equivalence between (i)-(iii) follows from Corollary~\ref{cor:localone}. Equivalence between (iii) and (iv) follows from the mean value theorem.
\end{proof}
Note that if the Jacobian of $Q$ is invertible at $u \neq 0$, then $Q'(u,v)$ cannot be the zero vector for any nonzero $v$, because directional derivatives satisfy $J(u) v = Q'(u,v)$, where $J(u)$ is the Jacobian of $Q$ at $u$. Thus if $Q$ has an everywhere invertible Jacobian, condition (iii) is satisfied for each $u \in U$. More generally, suppose that failures of invertibility of the Jacobian of $Q$ only occur on an isolated set of points. Then clearly, condition (iii) is satisfied for each $u \in U$.

Recall from Proposition~\ref{cor:three}, global injectivity of $Q$ is equivalent to local injectivity for each $u \in U$. Thus, global injectivity of $Q$ is also equivalent to conditions (ii) or (iii) of Proposition~\ref{prop:diff} holding for each $u \in U$.

By combining Lemma~\ref{lem:definitelaw} and Proposition~\ref{prop:diff}, we obtain a  generalization of \cite*{gale1965jacobian}, Theorem 6. This generalization drops the assumption that the Jacobian of $Q$ is everywhere invertible.
\begin{cor} \label{cor:gale}
Let Assumption~\ref{assm:diff} hold and assume the Jacobian of $Q$ is everywhere weakly quasi-definite. The following are equivalent:
\begin{enumerate}[i.]
    \item $Q$ is injective.
    \item For each $u \in U$, any of the equivalent conditions in Proposition~\ref{prop:diff} holds.
\end{enumerate}
\end{cor}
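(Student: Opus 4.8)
The plan is to derive Corollary~\ref{cor:gale} by simply assembling the two ingredients that precede it. First I would invoke Lemma~\ref{lem:definitelaw}: under Assumption~\ref{assm:diff}, the hypothesis that the Jacobian of $Q$ is everywhere weakly quasi-definite is \emph{equivalent} to $Q$ satisfying the law of demand. Hence the standing hypotheses of Corollary~\ref{cor:gale} are exactly the hypotheses of Proposition~\ref{prop:diff} (Assumption~\ref{assm:diff} plus the law of demand), so Proposition~\ref{prop:diff} is available for every $u \in U$.

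Next I would connect global injectivity to the pointwise conditions. Recall (as the remark following Proposition~\ref{prop:diff} spells out) that under Assumption~\ref{assm:diff} the map $Q$ is continuous and $U$ is open and convex, so Assumption~\ref{assm:main} holds and Proposition~\ref{cor:three} applies. By that proposition, $Q$ is injective if and only if $Q$ is locally injective, i.e.\ condition (ii) of Proposition~\ref{prop:diff} holds at every $u \in U$. Combining this with the equivalence of (i)--(iv) in Proposition~\ref{prop:diff}, I get: $Q$ is injective $\iff$ condition (ii) of Proposition~\ref{prop:diff} holds for each $u$ $\iff$ \emph{any} of the equivalent conditions (i)--(iv) of Proposition~\ref{prop:diff} holds for each $u$. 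That is precisely the claimed equivalence of (i) and (ii) in Corollary~\ref{cor:gale}.

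There is really no main obstacle here — the corollary is a packaging of earlier results — but the one point to state carefully is the quantifier structure: condition (i) of the corollary is a global statement, while condition (ii) is ``for each $u$, a pointwise condition holds,'' and the bridge between them is the local-to-global equivalence of Proposition~\ref{cor:three}, not Proposition~\ref{prop:diff} alone (Proposition~\ref{prop:diff} only tells us the four pointwise conditions agree at a fixed $u$; it is Proposition~\ref{cor:three} that upgrades ``locally injective everywhere'' to ``globally injective''). I would therefore write the proof in two sentences: one citing Lemma~\ref{lem:definitelaw} to replace the weak quasi-definiteness hypothesis by the law of demand, and one chaining Proposition~\ref{cor:three} with Proposition~\ref{prop:diff} to obtain the stated equivalence.
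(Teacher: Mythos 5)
Your proposal is correct and matches the paper's own (implicit) argument exactly: the paper obtains the corollary by combining Lemma~\ref{lem:definitelaw} (weak quasi-definiteness $\iff$ law of demand) with the local-to-global equivalence of Proposition~\ref{cor:three} and the pointwise equivalences of Proposition~\ref{prop:diff}, precisely as you describe. Your explicit attention to the quantifier structure is a welcome clarification but not a departure from the paper's route.
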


\section{Complements, Substitutes, and the Law of Demand}

\cite*{berry2013connected} have recently shown that a ``connected substitutes'' condition implies global injectivity. The present paper shows that a version of the law of demand, which allows complementarity, also suffices. This approach is not nested in and does not nest that of \cite*{berry2013connected}.

The setup of \cite*{berry2013connected} imposes the following properties on the demand mapping $Q : U \subseteq \mathbb{R}^K \rightarrow \mathbb{R}^K$:
\begin{enumerate}[i.]
    \item (Strict Own-Good Monotonicity) Let $k$ be arbitrary. For each $u, \tilde{u} \in U$ such that $u_k > \tilde{u}_k$ and $u_j = \tilde{u}_j$ for $j \neq k$, it follows that
    \[
    Q_k(u) > Q_k(\tilde{u}).
    \]
    \item (Weak Substitutability) Let $k$ be arbitrary. For each $u, \tilde{u} \in U$ such that $u_k > \tilde{u}_k$ and $u_j = \tilde{u}_j$ for $j \neq k$, it follows that for all $\ell \neq k$,
    \[
    Q_{\ell}(u) \leq Q_{\ell}(\tilde{u}).
    \]
\end{enumerate}
Condition (i) states demand increases in its own utility shifter. Condition (ii) states that if a utility shifter for good $k$ increases, then all other demands weakly decrease. Condition (i), except with a weak inequality, follows whenever $Q$ satisfies the law of demand, but Condition (ii) does not.

\cite*{berry2013connected} impose a ``connected substitutes'' condition, which directly assumes weak substitutability and implies strict own-good monotonicity (see Remark 1 in \cite*{berry2013connected}). For brevity, I omit a formal statement, and instead describe a key implication of their assumption: for arbitrary $u, \tilde{u} \in U$,
\[
Q(u) \geq Q(\tilde{u}) \implies u \geq \tilde{u},
\]
where $\geq$ denotes the usual partial order in $\mathbb{R}^K$, i.e. $u \geq \tilde{u}$ if and only if $u_k \geq \tilde{u}_k$ for each $k$. This shape restriction is called \textbf{inverse isotonicity}, and clearly implies that $Q$ is injective.\footnote{Appendix~\ref{app:pfunction} describes a characterization of inverse isotonicity in this setting using results in \cite{more1973p}.}  This property is essential for establishing injectivity using the approach of \cite*{berry2013connected}.

As discussed previously, the methods of this paper are distinct from those of \cite*{berry2013connected}. I provide two examples showing the distinction between inverse isotonicity and the law of demand. First, I show that the law of demand does not imply inverse isotonicity.
\begin{example} \label{ex:1}
Consider a linear demand system $Q(u) = Au$, where
\[
A = 
\begin{bmatrix}
    2 & 1 \\
    1 & 2
\end{bmatrix}.
\]
The matrix $A$ is symmetric and satisfies row-diagonal dominance (e.g. for each row, the diagonal $| 2 |$ exceeds the sum of the off-diagonal $|1|$), which are well-known conditions that ensures the matrix $(A + A')/2$ is positive semi-definite. Thus, the law of demand is established from Lemma~\ref{lem:definitelaw}. This demand mapping violates weak substitutability because the off-diagonals of $A$ are positive. In addition, it violates inverse isotonicity. To see this, consider the two vectors $u = (0,0)$, $\tilde{u} = (2, -1)$. Then $Q(u) = (0,0)$, $Q(\tilde{u}) = (3, 0)$ and so $Q(\tilde{u}) \geq Q(u)$, but we do not have $\tilde{u} \geq u$.
\end{example}

The following example illustrates that inverse isotonicity does not imply the law of demand.

\begin{example} \label{ex:2}
Now consider a demand system
\[
Q(u) = 
\begin{bmatrix}
    20 & -10 \\
    -1 & 2
\end{bmatrix} \begin{bmatrix} u^3_1 \\ u^3_2 \end{bmatrix}.
\]
The function $Q$ satisfies the connected substitutes property of \cite*{berry2013connected}, hence inverse isotonicity. It does not satisfy the law of demand. To see this, consider $u = (0,0)$ and $\tilde{u} = (1,2)$. One obtains $Q(u) = (0,0)$ and $Q(\tilde{u}) = (-60,7)$. Thus,
\[
(Q(\tilde{u}) - Q(u))'(\tilde{u} - u) = (-60,7)'(1,2) = -46 < 0.
\]
In this example, the law of demand fails because the substitution effect outweighs the own-good effect.
\end{example}

To shed further light on the distinction between inverse isotonicity and the law of demand, it is helpful to note that the law of demand is not an ordinal property. This is illustrated in Example~\ref{ex:2} by considering the strictly increasing function $f(v) = v^{1/3}$. Consider the transformed mapping
\[
\tilde{Q}(u) = Q((f(u_1),f(u_2))) =
\begin{bmatrix}
    20 & -10 \\
    -1 & 2
\end{bmatrix} \begin{bmatrix} u_1 \\ u_2 \end{bmatrix}.
\]
The mapping $\tilde{Q}$ satisfies the law of demand even though the original mapping $Q$ in Example~\ref{ex:2} violates the law of demand.\footnote{This follows because the symmetrized matrix of coefficients in $\tilde{Q}$ satisfies diagonal dominance.} Further analysis of the law of demand and a change of variables is covered in Section~\ref{sec:change}.

Finally, it is clear that strict own-good monotonicity, weak substitability, and inverse isotonicity are all ordinal properties in the following sense: they hold for some mapping $Q(u)$ if and only if they hold for $\tilde{Q}(u) = Q(f(u))$ where $f(u) = (f_1(u_1), \ldots, f_K(u_K))$, and each $f_k$ is strictly increasing.

\section{Quasilinear Utility}
In the standard consumer problem, quasilinear utility is a well-known class of preferences that implies the law of demand. I provide an injectivity result that exploits additional structure of this model.

Suppose an individual maximizes a utility function of the form
\[
y_0 + C(y_1, \ldots, y_k),
\]
with budget constraint $\sum_{k = 0}^K p_k y_k \leq M$. Suppose $p_0$ does not vary and is normalized to $1$. Under local nonsatiation and allowing negative quantities of $y_0$ (or sufficiently high income), the maximization problem is equivalent to choosing quantities to maximize
\[
-\sum_{k = 1}^K p_k y_k + C(y_1, \ldots, y_k),
\]
where $y_0 = M -\sum_{k = 1}^K p_k y_k$ has been substituted out. Thus, if the maximizer is unique we have
\[
Q(u) = \argmax_{y \in \mathbb{R}^K} \sum_{k = 1}^K u_k y_k + C(y),
\]
where $u_k = -p_k$; more generally, we may take $Q(u)$ to be an element of the argmax correspondence.

To see that $Q$ satisfies the law of demand, consider the necessary condition for maximization
\[
\sum_{k = 1}^K u_k Q_k(u) + C(Q(u)) \geq \sum_{k = 1}^K u_k Q_k(\tilde{u}) + C(Q(\tilde{u})).\footnote{If there are multiple maximizers, this inequality holds for any maximizers. In particular, a law of demand holds for arbitrary selectors from the argmax correspondence.}
\]
An analogous inequality holds with $u$ and $\tilde{u}$ reversed. Summing up the two analogous inequalities and rearranging establishes that $Q$ satisfies the law of demand.

Quasilinear models imply (weakly) more than the law of demand (\cite*{brown2007nonparametric}). The additional structure of quasilinear models yields the following result. Note that the domain $U$ need not be convex.
\begin{lemma}[cf. \cite*{rock70}, Theorems 23.5 and 25.1] \label{lem:injdif}
Let $Q: U \subseteq \mathbb{R}^K \rightarrow \mathbb{R}^K$ satisfy
\[
Q(u) \in \argmax_{y \in \mathbb{R}^K} \sum_{k = 1}^K u_k y_k + C(y)
\]
where $U$ is open. If $C : \mathbb{R}^K \rightarrow \mathbb{R} \cup \{ + \infty \}$ is concave, upper semi-continuous,\footnote{A function $f : \mathbb{R}^K \rightarrow \mathbb{R} \cup \{ + \infty \}$ is upper semi-continuous if $\{ y \mid f(y) \geq \alpha \}$ is closed for each $\alpha$.} and finite at some point, then the following are equivalent for arbitrary $u \in U$:
\begin{enumerate}[i.]
    \item $C$ is differentiable at $Q(u)$.
    \item There is no $\tilde{u} \in U$ such that $u \neq \tilde{u}$ and $Q(u) = Q(\tilde{u})$.
\end{enumerate}
\end{lemma}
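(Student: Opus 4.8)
The plan is to translate the maximization into the language of convex subdifferentials and then invoke Rockafellar's Theorems 23.5 and 25.1. Set $f:=-C$; this is a proper, lower semicontinuous, convex function (lower semicontinuity of $f$ is the assumed upper semicontinuity of $C$, and $f$ is proper since $C$ is finite somewhere). Writing $Q(u)\in\argmax_{y}\{\langle u,y\rangle-f(y)\}$, the conjugate subgradient theorem (Theorem 23.5) gives, for $u\in U$ and $y\in\mathbb{R}^K$, the equivalence $y\in\argmax_{z}\{\langle u,z\rangle-f(z)\}\iff u\in\partial f(y)$; in particular the defining property of the selector $Q$ is precisely that $u\in\partial f(Q(u))$ for every $u\in U$, and this is the fact I would lean on throughout. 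Combine it with Theorem 25.1: for $f$ finite at $\bar y$, $f$ (equivalently $C$) is differentiable at $\bar y$ if and only if $\partial f(\bar y)$ is a singleton. (That $f$ is finite at $\bar y=Q(u)$ follows because the supremum defining $f^*(u)$ is attained there with finite value.) Since $u\in\partial f(Q(u))$ always holds, condition (i) with $\bar y=Q(u)$ is exactly the statement $\partial f(Q(u))=\{u\}$.

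Given this, (i) $\Rightarrow$ (ii) is immediate: if $\partial f(Q(u))=\{u\}$ and some $\tilde u\in U$ has $Q(\tilde u)=Q(u)$, then $\tilde u\in\partial f(Q(\tilde u))=\partial f(Q(u))=\{u\}$, so $\tilde u=u$. For (ii) $\Rightarrow$ (i) I would argue the contrapositive. If $C$ is not differentiable at $\bar y:=Q(u)$ then $\partial f(\bar y)$ is not a singleton; being convex and containing $u$, it contains a nondegenerate segment $[u,w]$ with $w\neq u$. Openness of $U$ then lets me take $\tilde u:=u+t(w-u)\in U$ for some small $t>0$, with $\tilde u\neq u$ and $\tilde u\in\partial f(\bar y)$; applying Theorem 23.5 in the reverse direction, $\bar y$ maximizes $\langle\tilde u,z\rangle+C(z)$ over $z$, so $Q(\tilde u)=\bar y=Q(u)$ and (ii) fails.

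I expect the last step of the contrapositive to be the main obstacle: producing, from a second subgradient $w$ of $f$ at $\bar y$, an actual second point $\tilde u\neq u$ of $U$ that $Q$ sends to $Q(u)$. The two ingredients are convexity of $\partial f(\bar y)$ (so one can retreat from $w$ toward $u$ along a segment) and openness of $U$ (so the retreated point stays in $U$ while still being a subgradient of $f$ at $\bar y$); it is openness, not convexity, of $U$ that matters here, which is precisely why this lemma drops the convex-domain assumption of the earlier results. The one additional point to handle with care is that if the argmax at $\tilde u$ is multivalued, then ``$Q(\tilde u)=\bar y$'' should be read as ``$\bar y$ is an admissible selector value at $\tilde u$''; when the maximizer is unique (e.g. $C$ strictly concave) this subtlety disappears.
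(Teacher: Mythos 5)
Your translation into subdifferential language is exactly the route the paper intends (the paper omits the proof entirely, citing Rockafellar's Theorems 23.5 and 25.1), and your direction (i) $\Rightarrow$ (ii) is complete: Theorem 23.5 gives $u \in \partial(-C)(Q(u))$ for every $u \in U$, Theorem 25.1 identifies differentiability of $C$ at $Q(u)$ with $\partial(-C)(Q(u))$ being a singleton, and the chain $\tilde u \in \partial(-C)(Q(\tilde u)) = \partial(-C)(Q(u)) = \{u\}$ finishes that direction.

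The subtlety you flag at the end of (ii) $\Rightarrow$ (i) is, however, a genuine gap rather than a presentational nuisance: with $Q$ a fixed but arbitrary selector, the implication can actually fail. Your argument produces $\tilde u = u + t(w-u) \in U \cap \partial(-C)(\bar y)$ with $\tilde u \neq u$ and concludes only that $\bar y$ lies in the argmax at $\tilde u$; if that argmax is multivalued for every admissible $t$, the selector may avoid $\bar y$ along the whole segment and (ii) survives. Concretely, take $K=2$, $C(y) = -|y_1| - \max(|y_2|-1,0)$, $U = (-1/2,1/2)^2$. The argmax at $(u_1,0)$ is $\{0\}\times[-1,1]$ for every $|u_1|<1/2$; choose the selector $Q(u) = (0,\sign(u_2))$ for $u_2 \neq 0$, $Q(u_1,0) = (0,1)$ for $u_1 \neq 0$, and $Q(0,0) = (0,0)$. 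Then $Q^{-1}(Q(0,0)) = \{(0,0)\}$, so (ii) holds at $u=(0,0)$, yet $C$ is not differentiable at $Q(0,0)$ since $\partial(-C)((0,0)) = [-1,1]\times\{0\}$. What your argument genuinely proves equivalent to (i) is the stronger statement that $Q(u)$ is not a maximizer of $\langle \tilde u, \cdot\rangle + C$ for any $\tilde u \in U\setminus\{u\}$; to recover (ii) as literally stated one must either substitute that reformulation or assume the maximizer is unique (e.g., $C$ strictly concave), in which case your proof closes. Since the lemma is stated for an arbitrary element of the argmax correspondence, the defect sits in the statement itself rather than in anything your proof introduces --- but it needs to be resolved, not merely noted.
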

This result directly follows from \cite*{rock70} and so the proof is omitted. A corollary of this lemma is that if $C$ is everywhere differentiable (and the other conditions are met), then $Q$ is globally invertible. A version of this result has been used in \cite*{idsep}; I include this result for completeness, to illustrate how additional structure allows us to further specialize the results, and because the quasilinear structure is widely used.

\subsection{Relation to \cite*{brown1998estimation}}

A structure that shares a mathematical relationship with quasilinear utility has been studied in \cite*{brown1998estimation}. Suppose now that there is a budget constraint but the demand is not quasilinear. Formally, the consumer solves the problem
\begin{align*}
\max_{(y_0, y) \in \mathbb{R}^{K+1}_+} \sum_{k = 1}^K u_k y_k & + C(y_0, y) \qquad \text{ s.t. } \\
& \sum_{k = 1}^K p_k y_k + y_0 \leq I,
\end{align*}
where $I$ is income. \cite*{brown1998estimation} study this model with the interpretation that $u$ is a vector of exogenous unobservables.\footnote{Their presentation is slightly different since they consider also a term $u_0 y_0$ in the utility and then normalize $u_0 = 1$. With this normalization, this term can be absorbed into $C(y_0,y)$.} \cite{blundell2017individual} study a generalization that is not covered by our setup.

Note that fixing prices and income, this problem differs from the setup of Lemma~\ref{lem:injdif} only because of the budget constraint. However, under local nonsatiation the constraint is satisfied with equality and so the problem reduces to
\[
\max_{y \in \mathbb{R}^{K}_+} \sum_{k = 1}^K u_k y_k + C \left(I - \sum_{k = 1}^K p_k y_k, y \right).
\]
Injectivity of this demand system is then covered by Lemma~\ref{lem:injdif}. In particular, fixing prices and income, differentiability of the mapping $\tilde{C}(y) = C \left(I - \sum_{k = 1}^K p_k y_k, y \right)$ and injectivity are equivalent under certain conditions, as formalized in Lemma~\ref{lem:injdif}.\footnote{A related change of variables argument has appeared in \cite{idsep} to study a discrete choice problem.} Importantly, while this argument can provide sharp conditions relating injectivity and differentiability, it does \textit{not} establish smoothness of the inverse. Indeed, differentiability of $\tilde{C}$ does not rule out multiple maximizers, and so the demand mapping need not even be continuous. \cite{brown1998estimation} provide additional conditions that ensure injectivity \textit{and} smoothness.

\subsection{Discrete Choice and Aggregation}
Many models outside of the consumer problem that have additively separable unobservable heterogeneity also share the structure of the quasilinear utility model. In particular, they imply a version of the law of demand in utility indices that need not involve price. For the additive random utility model, this has been recognized at least since the seminal work of \cite*{mcfadden1981}. Other examples that share this structure are the discrete choice bundles model of \cite*{gentzkow2007valuing}, the matching model of \cite*{fox2018unobserved}, and a model of decisions under uncertainty considered in \cite*{agarwal2018demand}; see \cite*{idsep} for details.
\begin{example}[Additive Random Utility Models (\cite*{mcfadden1981})]
Let
\[
v_j = u_j + \eps_j
\]
denote the latent utility for alternative $j$. Treat $\eps = (\eps_1, \ldots, \eps_K)$ as a random variable and $u$ as a constant. Normalize the latent utility of the outside good ($j = 0$) to $0$ and assume there are $K$ inside goods. Suppose the individual chooses an alternative that maximizes latent utility and let $\tilde{D}(u, \eps) \in \{0, 1\}^{K}$ be a vector of indicators denoting denoting which, if any, of the inside goods ($j > 0$) is chosen. Then by similar arguments as in the quasilinear utility example, necessary conditions for optimality imply
\[
\left( \tilde{D}(u, \eps) - \tilde{D}(\tilde{u}, \eps) \right) \cdot ( u - \tilde{u} ) \geq 0.
\]
Moreover, letting $Q(u) = \E \left[ \tilde{D}(u, \eps) \right]$ where the expectation is over $\eps$, we have
\[
( Q(u) - Q(\tilde{u}) ) \cdot ( u- \tilde{u} ) \geq 0.
\]
\end{example}
In this example, $Q(u)$ is the vector of probabilities for choosing each of the $K$ inside goods.\footnote{In this paper I treat $u$ as a fixed parameter. If $u$ is treated as an observable random variable, then as long as $u$ is independent of $\eps$ (and some technical conditions are met), $Q(u)$ is the conditional probability of choosing each alternative, conditional on the shifters $u$.} This example illustrates two principles. First, the law of demand is preserved under expectations. In particular, the law of demand holding at the individual level implies it holds at the aggregate level.\footnote{See \cite{shi2018estimating} for a related application of this principle for discrete choice panel data.} Second, injectivity results may be used for aggregate data even when injectivity fails at the individual level. Note that for fixed $\eps$, the function $\tilde{D}(\cdot,\eps)$ cannot be injective whenever $U$ has more than $K+1$ points. However, taking expectations can serve to smooth out discreteness and restore injectivity. Whether $Q$ is injective depends on the distribution of $\eps$ (\cite*{norets2013surjectivity}; see also \cite*{azevedo2013walrasian}).

\section{Law of Demand with a Change of Variables} \label{sec:change}

In some models the law of demand does not hold, but holds after a change of variables. We can adapt  Proposition~\ref{cor:three} to such settings when the change of variables is sufficiently well-behaved. I formalize that the law of demand holds after a change of variables as follows.
\begin{assm} \label{assm:change}
$\tilde{Q}(u) = Q(f(u))$, where $Q : U \subseteq \mathbb{R}^K \rightarrow \mathbb{R}^K$ satisfies the law of demand and is continuous, $f : T \subseteq \mathbb{R}^K \rightarrow U \subseteq \mathbb{R}^K$, and $U$ and $T$ are open and convex.
\end{assm}
When $f$ is a \textbf{homeomorphism}, i.e. a continuous function with a continuous inverse, we still obtain that local injectivity implies global injectivity.
\begin{prop} \label{prop:homeo}
Let Assumption~\ref{assm:change} hold with $f$ a homeomorphism. Then the following are equivalent:
\begin{enumerate}[i.]
    \item $\tilde{Q}$ is injective.
    \item $\tilde{Q}$ is locally injective.
\end{enumerate}
\end{prop}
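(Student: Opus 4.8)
\textit{Proof proposal.} The plan is to reduce the claim to Proposition~\ref{cor:three} applied to $Q$ itself, by exploiting the fact that a homeomorphism transports both local and global injectivity back and forth between $\tilde{Q}$ and $Q$. The implication (i) $\implies$ (ii) is immediate, so all the content is in (ii) $\implies$ (i). The first thing I would record is a purely algebraic equivalence: since $f$ is a bijection from $T$ onto $U$, $\tilde{Q}$ is injective on $T$ if and only if $Q$ is injective on $U$. Indeed, $\tilde{Q}(t) = \tilde{Q}(t')$ is the same as $Q(f(t)) = Q(f(t'))$; injectivity of $f$ then gives one direction and surjectivity of $f$ onto $U$ gives the other.

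The key step is to show that local injectivity of $\tilde{Q}$ everywhere implies local injectivity of $Q$ everywhere on $U$. Fix $u_0 \in U$ and, using that $f$ is onto $U$, write $u_0 = f(t_0)$ for some $t_0 \in T$. By hypothesis there is a neighborhood $W \subseteq T$ of $t_0$ on which $\tilde{Q}$ is injective. Because $f$ is a homeomorphism it is an open map, so $f(W)$ is an open neighborhood of $u_0$ contained in $U$. Moreover $Q$ restricted to $f(W)$ is injective: any two points of $f(W)$ are of the form $f(t), f(t')$ with $t, t' \in W$, and $Q(f(t)) = Q(f(t'))$ means $\tilde{Q}(t) = \tilde{Q}(t')$, hence $t = t'$ and so $f(t) = f(t')$. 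Thus $Q$ is locally injective at every point of $U$.

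To finish, note that $Q$ satisfies Assumption~\ref{assm:main} — it satisfies the law of demand, is continuous, and $U$ is open and convex — so Proposition~\ref{cor:three} applies and the local injectivity of $Q$ just established upgrades to global injectivity of $Q$ on $U$. Combining this with the algebraic equivalence from the first paragraph yields injectivity of $\tilde{Q}$, which is (i). The only delicate point I anticipate is bookkeeping about the direction of $f$: one must use that $f$ is an \emph{open} map in order to push a neighborhood in $T$ forward to a genuine neighborhood in $U$, and that $f$ is \emph{onto} $U$ so that every point of $U$ — not merely the points in the image of some preselected set — inherits local injectivity of $Q$. The convexity of $T$ assumed in Assumption~\ref{assm:change} is not needed for this argument.
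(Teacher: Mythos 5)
Your proof is correct, but it takes a genuinely different route from the paper's. The paper argues directly on the inverse images of $\tilde{Q}$: it notes that $\tilde{Q}^{-1}(\tilde{Q}(u)) = f^{-1}\bigl(Q^{-1}(Q(f(u)))\bigr)$ is the continuous image under $f^{-1}$ of a convex (hence connected) set, so it is connected; local injectivity makes its points isolated, and a nonempty connected set of isolated points is a singleton. You instead transport local injectivity backward through the homeomorphism --- using that $f$ is an open surjection onto $U$, every point of $U$ gets a neighborhood $f(W)$ on which $Q$ is injective --- and then invoke Proposition~\ref{cor:three} for $Q$ itself to upgrade to global injectivity, finishing with the observation that a composition of injections is injective. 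Both arguments are sound and lean on the same underlying fact (convexity of $Q^{-1}(y)$ from Lemma~\ref{lem:invconv}), just at different stages: the paper uses only connectedness of that convex set and proves the singleton property of each fiber of $\tilde{Q}$ directly, which is a slightly stronger structural statement (connected fibers) that could be of independent interest; your version is more modular in that it reuses the already-established local-to-global result rather than re-deriving a topological argument. Your remarks on the delicate points are apt: you correctly identify that $f$ must be read as a homeomorphism \emph{onto} $U$ (the paper's own proof implicitly requires this too, since otherwise $Q^{-1}(Q(f(u))) \cap f(T)$ need not be connected), and that convexity of $T$ plays no role in either argument.
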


\begin{proof}
Clearly (i) $\implies$ (ii), and so we wish to show (ii) $\implies$ (i). By Lemma~\ref{lem:invconv}, the set $Q^{-1}(Q(f(u)))$ is convex, hence connected, for each $u \in U$. Since $f^{-1}$ is continuous, its image of the connected set $Q^{-1}(Q(f(u)))$ is connected.\footnote{Recall a set is connected if it cannot be partitioned into two disjoint nonempty sets that are open in the relative topology. Convex sets are clearly connected.} Hence, $\tilde{Q}^{-1}(\tilde{Q}(u)) = f^{-1}(Q^{-1}(Q(f(u))))$ is connected.

By the assumption of local injectivity of $\tilde{Q}$, the set $\tilde{Q}^{-1}(\tilde{Q}(u))$ consists of isolated points. That is, each $\tilde{u} \in \tilde{Q}^{-1}(\tilde{Q}(u))$ has a neighborhood $H$ such that $H \cap \tilde{Q}^{-1}(\tilde{Q}(u)) = \tilde{u}$. Since $\tilde{Q}^{-1}(\tilde{Q}(u))$ is connected, nonempty and consists of isolated points, it can have exactly one point. Since this is true for arbitrary $u$, we obtain part (i).
\end{proof}
An example of a homeomorphism is $f(u) = (f_1 (u_1), \ldots, f_K(u_K))$, where each $f_k$ is strictly increasing and continuous and $T$ is rectangular (i.e. the Cartesian product of intervals). Note that we no longer conclude that checking local injectivity of $\tilde{Q}$ is equivalent to checking whether it is constant on line segments.

We can obtain a sharper result with alternative structure on $f$. Say that a mapping $f : \mathbb{R}^K \rightarrow \mathbb{R}^K$ is \textbf{affine} if it may be written $f(u) = Au + b$ for some $K \times K$ matrix $A$ and vector $b \in \mathbb{R}^K$. Affine mappings need not satisfy the law of demand.\footnote{They do precisely when the symmetrized matrix $(A + A')/2$ is positive semi-definite (e.g. \cite{rock70}, p. 240), where $A'$ denotes the transpose of $A$. This can be seen by writing $(f(u) - f(\tilde{u}))'(u - \tilde{u}) = (u - \tilde{u})'A' (u - \tilde{u})$ and recalling the definition of the law of demand.} Nonetheless, affine mappings have the important property that for a convex set $B \subseteq \mathbb{R}^K$, $f^{-1}(B)$ is convex. By leveraging Lemma~\ref{lem:invconv}, this leads to the following result.

\begin{prop}
Let Assumption~\ref{assm:change} hold for affine $f$. Then for each $y \in \mathbb{R}^K$,
\[
\tilde{Q}^{-1}(y) = \{ u \in T \mid \tilde{Q}(u) = y \}
\]
is convex. In particular, the following are equivalent:
\begin{enumerate}[i.]
    \item $\tilde{Q}$ is injective.
    \item $\tilde{Q}$ is locally injective.
    \item The only line segments in $U$ along which $Q$ is constant are points.
\end{enumerate}
\end{prop}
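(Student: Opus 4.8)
The plan is to reduce the convexity claim to Lemma~\ref{lem:invconv} combined with the elementary fact that affine maps pull back convex sets to convex sets, and then to re-run the local-to-global argument of Proposition~\ref{cor:three}.

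First I would write $f(u) = Au + b$ and note that, for any $y \in \mathbb{R}^K$,
\[
\tilde Q^{-1}(y) = \{ u \in T : Q(f(u)) = y \} = T \cap f^{-1}\!\left( Q^{-1}(y) \right).
\]
By Lemma~\ref{lem:invconv} --- applicable because $Q$ satisfies the law of demand, is continuous, and $U$ is open and convex --- the set $Q^{-1}(y)$ is convex. Next I would record the claim that $f^{-1}(C)$ is convex whenever $C \subseteq \mathbb{R}^K$ is convex: for $u_1, u_2 \in f^{-1}(C)$ and $\alpha \in [0,1]$ one has $f(\alpha u_1 + (1-\alpha) u_2) = \alpha f(u_1) + (1-\alpha) f(u_2) \in C$, so $\alpha u_1 + (1-\alpha) u_2 \in f^{-1}(C)$. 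Taking $C = Q^{-1}(y)$ shows $f^{-1}(Q^{-1}(y))$ is convex; intersecting with the convex set $T$ preserves convexity, which establishes that $\tilde Q^{-1}(y)$ is convex.

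For the stated equivalences I would argue exactly as in the proof of Proposition~\ref{cor:three}, now applied to $\tilde Q$ directly. The point to be careful about is that $\tilde Q$ need not itself satisfy the law of demand, so neither Lemma~\ref{lem:invconv} nor Proposition~\ref{cor:three} can be invoked verbatim for $\tilde Q$; however, their proofs only ever use continuity of the map, an open and convex domain, and convexity of inverse images --- and all three are now in hand ($\tilde Q = Q \circ f$ is continuous since $Q$ is continuous and $f$ is affine, $T$ is open and convex, and the previous paragraph gives convexity of $\tilde Q^{-1}(y)$). Concretely, (i) $\Rightarrow$ (ii) $\Rightarrow$ (iii) are immediate, and for (iii) $\Rightarrow$ (i): if $\tilde Q(u) = \tilde Q(\tilde u)$, then the convex set $\tilde Q^{-1}(\tilde Q(u))$ contains the entire segment joining $u$ and $\tilde u$, so $\tilde Q$ is constant on that segment; by (iii) such a segment is a point, hence $u = \tilde u$.

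I do not anticipate a genuine obstacle here: the whole content is the observation that the monotone-operator input (Lemma~\ref{lem:invconv}) and the affine-pullback fact together yield convexity of the inverse images of $\tilde Q$, after which everything is a rerun of the earlier arguments. The only mild subtlety is resisting the temptation to apply the earlier results to $\tilde Q$ as a black box --- since $\tilde Q$ may violate the law of demand --- and instead re-deriving convexity of $\tilde Q^{-1}(y)$ from scratch as above.
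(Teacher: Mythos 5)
Your proposal is correct and follows the paper's own proof exactly: the paper likewise observes that $Q^{-1}(\tilde{Q}(u))$ is convex by Lemma~\ref{lem:invconv}, that $\tilde{Q}^{-1}(\tilde{Q}(u)) = f^{-1}(Q^{-1}(\tilde{Q}(u)))$ is convex because affine preimages of convex sets are convex, and then declares the equivalences ``analogous to Proposition~\ref{cor:three}.'' You merely spell out the details the paper leaves implicit (the intersection with $T$, and the point that one reruns the argument of Proposition~\ref{cor:three} using the newly established convexity rather than citing it as a black box, since $\tilde{Q}$ need not satisfy the law of demand), which is exactly the right way to fill in that gap.
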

\begin{proof}
The set $Q^{-1}(\tilde{Q}(u))$ is convex. Thus the set $\tilde{Q}^{-1}(\tilde{Q}(u)) = f^{-1}(Q^{-1}(\tilde{Q}(u)))$ is convex because $f$ is affine. The result is then analogous to Proposition~\ref{cor:three}.
\end{proof}

\section{Discussion}

This paper leverages a classical result in monotone operator theory to provide simple necessary and sufficient conditions to check when a demand mapping is injective. Specifically, for continuous demand mappings that satisfy the law of demand and that are defined over an open convex domain, local injectivity and global injectivity are equivalent. In addition, injectivity can be checked by seeing if the demand mapping is constant over any line segments that are not points. I describe the relationship to a classical result of \cite*{gale1965jacobian} for quasi-definite Jacobians, providing necessary and sufficient conditions for global injectivity in terms of directional derivatives. Finally, I show that the law of demand is not nested in and does not nest the ``connected substitutes'' condition of \cite{berry2013connected}.

\begin{appendices}

\section{Inverse Isotonicity and Substitution} \label{app:pfunction}

To keep the paper self-contained, I provide a primitive condition that ensures the demand mapping satisfies inverse isotonicity.
\begin{lemma}[\cite{more1973p}]
Let $Q : U \subseteq \mathbb{R}^K \rightarrow \mathbb{R}^K$, where $U$ is a Cartesian product. In addition, assume $Q$ satisfies strict own-good monotonicity and weak substitutability. The following are equivalent:
\begin{enumerate}
    \item $Q$ satisfies inverse isotonicity.
    \item $Q$ is a $P$-function, i.e. for $u \neq \tilde{u}$, there is some $k$ such that
    \[
    (Q_k(u) - Q_k(\tilde{u}))(u_k - \tilde{u}_k) > 0.
    \]
\end{enumerate}
\end{lemma}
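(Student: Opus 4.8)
The plan is to prove the two implications separately, using only weak substitutability together with the fact that $U$ is a Cartesian product; strict own-good monotonicity, though part of the standing hypotheses, does not appear to be needed. Throughout I write $u\vee\tilde u$ and $u\wedge\tilde u$ for the coordinatewise maximum and minimum of $u,\tilde u\in U$. Because $U$ is a product, these points lie in $U$, and more generally every point obtained by replacing some coordinates of $u$ by the corresponding coordinates of $\tilde u$ lies in $U$. This lets me apply weak substitutability---which is stated only for a change in a single coordinate---along coordinatewise monotone paths: if a point is raised in a set $S$ of coordinates (one at a time), then $Q_\ell$ weakly decreases for every $\ell\notin S$.

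For the implication inverse isotonicity $\Rightarrow$ $P$-function I argue by contraposition. Suppose $Q$ is not a $P$-function, so there are $x\neq y$ with $(Q_k(x)-Q_k(y))(x_k-y_k)\le 0$ for every $k$; I show $x=y$. Put $w=x\vee y$. I claim $Q(w)\le Q(x)$. For a coordinate $\ell$ with $x_\ell\ge y_\ell$ we have $w_\ell=x_\ell$, and passing from $x$ to $w$ only raises the coordinates $j$ with $x_j<y_j$, so weak substitutability gives $Q_\ell(w)\le Q_\ell(x)$. For a coordinate $\ell$ with $x_\ell<y_\ell$, the sign condition gives $Q_\ell(x)\ge Q_\ell(y)$, while $w_\ell=y_\ell$ and passing from $y$ to $w$ only raises the coordinates $j$ with $y_j<x_j$ (and $\ell$ is not one of them), so $Q_\ell(w)\le Q_\ell(y)\le Q_\ell(x)$. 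Since the sign condition is symmetric in $x$ and $y$, the same argument gives $Q(w)\le Q(y)$. Inverse isotonicity now yields $w\le x$ and $w\le y$, i.e. $x\vee y\le x$ and $x\vee y\le y$, which forces $x=y$.

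For the implication $P$-function $\Rightarrow$ inverse isotonicity, suppose $Q$ is a $P$-function and $Q(u)\ge Q(\tilde u)$; I must show $u\ge\tilde u$. Suppose not, and set $S=\{k: u_k<\tilde u_k\}\neq\emptyset$, $T=\{k: u_k>\tilde u_k\}$, and $z=u\wedge\tilde u\in U$. Then $\tilde u\ge z$ and $\tilde u$ and $z$ differ exactly on $S$, so $\tilde u\neq z$; applying the $P$-function property to the pair $\tilde u,z$ gives some $k$ with $(Q_k(\tilde u)-Q_k(z))(\tilde u_k-z_k)>0$. Since $\tilde u_k-z_k>0$ only when $k\in S$, this $k$ lies in $S$ and satisfies $Q_k(\tilde u)>Q_k(z)$. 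On the other hand, passing from $z$ up to $u$ raises only the coordinates in $T$, and $k\in S$ is not among them, so weak substitutability gives $Q_k(u)\le Q_k(z)$; together with $Q_k(u)\ge Q_k(\tilde u)$ from the hypothesis this gives $Q_k(\tilde u)\le Q_k(z)$, contradicting $Q_k(\tilde u)>Q_k(z)$.

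I expect the second implication to be the main obstacle: the real content is choosing the right pair of points to feed into the $P$-function property---here $\tilde u$ together with the meet $z=u\wedge\tilde u$---and recognizing that the $P$-function inequality must occur at a coordinate in $S$, precisely the set of coordinates where the hypothesis $Q(u)\ge Q(\tilde u)$ and weak substitutability (applied along the coordinate path from $z$ to $u$) already force the reverse inequality. The first implication is essentially bookkeeping once one passes to $w=x\vee y$ and separates coordinates by the sign of $x_k-y_k$. In both implications the Cartesian-product hypothesis on $U$ is used essentially, to keep $u\vee\tilde u$ and $u\wedge\tilde u$ in the domain and to legitimize the single-coordinate form of weak substitutability along coordinatewise monotone paths.
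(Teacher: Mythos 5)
Your proof is correct. Note, however, that the paper does not actually prove this lemma: it cites Mor\'e (1973) and merely remarks that the published proofs, stated there for rectangles, go through verbatim for arbitrary Cartesian products. So your contribution is a self-contained argument where the paper has none, and it is the standard lattice-theoretic route in the Mor\'e--Rheinboldt tradition: pass to the join $x\vee y$ (for inverse isotonicity $\Rightarrow$ $P$-function) or the meet $u\wedge\tilde u$ (for the converse), and propagate weak substitutability along single-coordinate monotone paths. Both implications check out: in the first, the case split on the sign of $x_\ell-y_\ell$ correctly yields $Q(x\vee y)\le Q(x)$ and, by symmetry of the negated $P$-condition, $Q(x\vee y)\le Q(y)$, after which inverse isotonicity forces $x=y$; in the second, the witnessing coordinate of the $P$-inequality for the pair $(\tilde u, u\wedge\tilde u)$ must lie in $S=\{k: u_k<\tilde u_k\}$, and the path from $u\wedge\tilde u$ up to $u$ raises only coordinates outside $S$, producing the contradiction. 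Your observation that strict own-good monotonicity is never invoked is also sound --- indeed the $P$-function property applied to single-coordinate perturbations already implies it --- and your explicit tracking of where the Cartesian-product hypothesis enters (keeping $x\vee y$, $u\wedge\tilde u$, and all intermediate points of the coordinatewise paths inside $U$) is precisely the substance behind the paper's unproved claim that the rectangle assumption in Mor\'e (1973) can be relaxed. The one thing your write-up buys beyond the citation is exactly this: it makes the domain requirement transparent and verifies the generalization the paper asserts without proof.
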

We note that while \cite{more1973p} prove this result for $U$ a rectangle (i.e. a Cartesian product of intervals), their proofs go through without modification when $U$ is an arbitrary Cartesian product. $P$-functions are closely related to functions whose Jacobians are $P$-matrices, whose injectivity properties are studied in \cite{gale1965jacobian}. See \cite{more1973p} for more details.

From this result we conclude that because the connected substitutes assumption of \cite*{berry2013connected} implies inverse isotonicity, the demand mapping in their setup is a $P$-function. This can be deduced from their Lemma 3, which states that under their assumptions, if $u \neq \tilde{u}$ and $\mathcal{I} = \{ k \mid u_k > \tilde{u}_k \}$ is nonempty, then
\[
\sum_{k \in \mathcal{I}} Q_k(u) > \sum_{k \in \mathcal{I}} Q_k(\tilde{u}).
\]
This implies that there must be some $k \in \mathcal{I}$ such that $(Q_k(\tilde{u}) - Q_k(u))(u_k - \tilde{u}_k) > 0$, i.e. $Q$ must be a $P$-function.\footnote{Note that if $\mathcal{I}$ is empty we can repeat the argument with $u$ and $\tilde{u}$ interchanged.}
\end{appendices}

\bibliographystyle{plainnat}
\bibliography{ref}

\begin{thebibliography}{37}
\providecommand{\natexlab}[1]{#1}
\providecommand{\url}[1]{\texttt{#1}}
\expandafter\ifx\csname urlstyle\endcsname\relax
  \providecommand{\doi}[1]{doi: #1}\else
  \providecommand{\doi}{doi: \begingroup \urlstyle{rm}\Url}\fi

\bibitem[Adao et~al.(2017)Adao, Costinot, and Donaldson]{adao2017nonparametric}
Rodrigo Adao, Arnaud Costinot, and Dave Donaldson.
\newblock Nonparametric counterfactual predictions in neoclassical models of
  international trade.
\newblock \emph{American Economic Review}, 107\penalty0 (3):\penalty0 633--89,
  2017.

\bibitem[Agarwal and Somaini(2018)]{agarwal2018demand}
Nikhil Agarwal and Paulo Somaini.
\newblock Demand analysis using strategic reports: An application to a school
  choice mechanism.
\newblock \emph{Econometrica}, 86\penalty0 (2):\penalty0 391--444, 2018.

\bibitem[Ahn et~al.(2017)Ahn, Ichimura, Powell, and Ruud]{ahn2017simple}
Hyungtaik Ahn, Hidehiko Ichimura, James~L Powell, and Paul~A Ruud.
\newblock Simple estimators for invertible index models.
\newblock \emph{Journal of Business \& Economic Statistics}, \penalty0
  (just-accepted), 2017.

\bibitem[Allen and Rehbeck(2019)]{idsep}
Roy Allen and John Rehbeck.
\newblock Identification with additively separable heterogeneity.
\newblock \emph{Econometrica}, 8\penalty0 (3):\penalty0 1021--1054, 2019.

\bibitem[Arrow and Hurwicz(1958)]{arrow1958stability}
Kenneth~J Arrow and Leonid Hurwicz.
\newblock On the stability of the competitive equilibrium, i.
\newblock \emph{Econometrica: Journal of the Econometric Society}, pages
  522--552, 1958.

\bibitem[Arrow and Hurwicz(1960)]{arrow1960some}
Kenneth~J Arrow and Leonid Hurwicz.
\newblock Some remarks on the equilibria of economic systems.
\newblock \emph{Econometrica: Journal of the Econometric Society}, pages
  640--646, 1960.

\bibitem[Azevedo et~al.(2013)Azevedo, Weyl, and White]{azevedo2013walrasian}
Eduardo~M Azevedo, E~Glen Weyl, and Alexander White.
\newblock Walrasian equilibrium in large, quasilinear markets.
\newblock \emph{Theoretical Economics}, 8\penalty0 (2):\penalty0 281--290,
  2013.

\bibitem[Beckert and Blundell(2008)]{beckert2008heterogeneity}
Walter Beckert and Richard Blundell.
\newblock Heterogeneity and the non-parametric analysis of consumer choice:
  conditions for invertibility.
\newblock \emph{The Review of Economic Studies}, 75\penalty0 (4):\penalty0
  1069--1080, 2008.

\bibitem[Berry et~al.(1995)Berry, Levinsohn, and Pakes]{berry1995automobile}
Steven Berry, James Levinsohn, and Ariel Pakes.
\newblock Automobile prices in market equilibrium.
\newblock \emph{Econometrica: Journal of the Econometric Society}, pages
  841--890, 1995.

\bibitem[Berry et~al.(2013)Berry, Gandhi, and Haile]{berry2013connected}
Steven Berry, Amit Gandhi, and Philip Haile.
\newblock Connected substitutes and invertibility of demand.
\newblock \emph{Econometrica}, 81\penalty0 (5):\penalty0 2087--2111, 2013.

\bibitem[Berry(1994)]{berry1994estimating}
Steven~T Berry.
\newblock Estimating discrete-choice models of product differentiation.
\newblock \emph{The RAND Journal of Economics}, pages 242--262, 1994.

\bibitem[Berry and Haile(2009)]{berry2009nonparametric}
Steven~T Berry and Philip~A Haile.
\newblock Nonparametric identification of multinomial choice demand models with
  heterogeneous consumers.
\newblock Technical report, National Bureau of Economic Research, 2009.

\bibitem[Berry and Haile(2014)]{berry2014identification}
Steven~T Berry and Philip~A Haile.
\newblock Identification in differentiated products markets using market level
  data.
\newblock \emph{Econometrica}, 82\penalty0 (5):\penalty0 1749--1797, 2014.

\bibitem[Berry and Haile(2018)]{berry2018identification}
Steven~T Berry and Philip~A Haile.
\newblock Identification of nonparametric simultaneous equations models with a
  residual index structure.
\newblock \emph{Econometrica}, 86\penalty0 (1):\penalty0 289--315, 2018.

\bibitem[Blundell et~al.(2017)Blundell, Kristensen, Matzkin,
  et~al.]{blundell2017individual}
Richard Blundell, Dennis Kristensen, Rosa Matzkin, et~al.
\newblock Individual counterfactuals with multidimensional unobserved
  heterogeneity.
\newblock Technical report, Centre for Microdata Methods and Practice,
  Institute for Fiscal Studies, 2017.

\bibitem[Bonnet et~al.(2017)Bonnet, Galichon, O'Hara, and
  Shum]{bonnet2017yogurts}
Odran Bonnet, Alfred Galichon, Keith O'Hara, and Matthew Shum.
\newblock Yogurts choose consumers? identification of random utility models via
  two-sided matching.
\newblock 2017.

\bibitem[Brown and Calsamiglia(2007)]{brown2007nonparametric}
Donald~J Brown and Caterina Calsamiglia.
\newblock The nonparametric approach to applied welfare analysis.
\newblock \emph{Economic Theory}, 31\penalty0 (1):\penalty0 183--188, 2007.

\bibitem[Brown and Matzkin(1998)]{brown1998estimation}
Donald~J Brown and Rosa~Liliana Matzkin.
\newblock \emph{Estimation of nonparametric functions in simultaneous equations
  models, with an application to consumer demand}.
\newblock Citeseer, 1998.

\bibitem[Chesher and Rosen(2017)]{chesher2017generalized}
Andrew Chesher and Adam~M Rosen.
\newblock Generalized instrumental variable models.
\newblock \emph{Econometrica}, 85\penalty0 (3):\penalty0 959--989, 2017.

\bibitem[Chiappori and Komunjer(2009)]{chiappori2009nonparametric}
P~Chiappori and Ivana Komunjer.
\newblock On the nonparametric identification of multiple choice models.
\newblock \emph{Manuscript, Columbia University}, 2009.

\bibitem[Fosgerau et~al.(2018)Fosgerau, de~Palma, and
  Monardo]{fosgerau2018demand}
Mogens Fosgerau, Andre de~Palma, and Julien Monardo.
\newblock Demand models for differentiated products with complementarity and
  substitutability.
\newblock 2018.

\bibitem[Fox et~al.(2018)Fox, Yang, and Hsu]{fox2018unobserved}
Jeremy~T Fox, Chenyu Yang, and David~H Hsu.
\newblock Unobserved heterogeneity in matching games.
\newblock \emph{Journal of Political Economy}, 126\penalty0 (4):\penalty0
  000--000, 2018.

\bibitem[Gale and Nikaido(1965)]{gale1965jacobian}
David Gale and Hukukane Nikaido.
\newblock The jacobian matrix and global univalence of mappings.
\newblock \emph{Mathematische Annalen}, 159\penalty0 (2):\penalty0 81--93,
  1965.

\bibitem[Gentzkow(2007)]{gentzkow2007valuing}
Matthew Gentzkow.
\newblock Valuing new goods in a model with complementarity: Online newspapers.
\newblock \emph{American Economic Review}, 97\penalty0 (3):\penalty0 713--744,
  2007.

\bibitem[Hildenbrand(1983)]{hildenbrand1983law}
Werner Hildenbrand.
\newblock On the" law of demand".
\newblock \emph{Econometrica: Journal of the Econometric Society}, pages
  997--1019, 1983.

\bibitem[Kassay et~al.(2009)Kassay, Pintea, and
  Szenkovits]{kassay2009convexity}
G{\'a}bor Kassay, Cornel Pintea, and Ferenc Szenkovits.
\newblock On convexity of preimages of monotone operators.
\newblock \emph{Taiwanese Journal of Mathematics}, pages 675--686, 2009.

\bibitem[L{\'a}szl{\'o}(2016)]{laszlo2016injectivity}
Szil{\'a}rd L{\'a}szl{\'o}.
\newblock On injectivity of a class of monotone operators with some univalency
  consequences.
\newblock \emph{Mediterranean Journal of Mathematics}, 13\penalty0
  (2):\penalty0 729--744, 2016.

\bibitem[Mas-Colell(1991)]{mas1991uniqueness}
Andreu Mas-Colell.
\newblock On the uniqueness of equilibrium once again.
\newblock In William~A Barnett, Bernard Cornet, Claude D'Aspremont, Jean
  Gabszewicz, and Andreu Mas-Colell, editors, \emph{Equilibrium Theory and
  Applications: Proceedings of the Sixth International Symposium in Economic
  Theory and Econometrics}, chapter~12, pages 275--296. Cambridge University
  Press Cambridge, 1991.

\bibitem[Matzkin(2008)]{matzkin2008identification}
Rosa~L Matzkin.
\newblock Identification in nonparametric simultaneous equations models.
\newblock \emph{Econometrica}, 76\penalty0 (5):\penalty0 945--978, 2008.

\bibitem[Matzkin(2015)]{matzkin2015estimation}
Rosa~L Matzkin.
\newblock Estimation of nonparametric models with simultaneity.
\newblock \emph{Econometrica}, 83\penalty0 (1):\penalty0 1--66, 2015.

\bibitem[McFadden(1981)]{mcfadden1981}
Daniel McFadden.
\newblock Econometric models of probabilistic choice.
\newblock In Charles~F. Manski and Daniel McFadden, editors, \emph{Structural
  Analysis of Discrete Data with Econometric Applications}. Mit Press
  Cambridge, MA, 1981.

\bibitem[Mor{\'e} and Rheinboldt(1973)]{more1973p}
J~Mor{\'e} and Werner Rheinboldt.
\newblock On p-and s-functions and related classes of n-dimensional nonlinear
  mappings.
\newblock \emph{Linear Algebra and its Applications}, 6:\penalty0 45--68, 1973.

\bibitem[Norets and Takahashi(2013)]{norets2013surjectivity}
Andriy Norets and Satoru Takahashi.
\newblock On the surjectivity of the mapping between utilities and choice
  probabilities.
\newblock \emph{Quantitative Economics}, 4\penalty0 (1):\penalty0 149--155,
  2013.

\bibitem[Parthasarathy(2006)]{parthasarathy2006global}
Thiruvenkatachari Parthasarathy.
\newblock \emph{On global univalence theorems}, volume 977 of \emph{Lecture
  Notes in Mathematics}.
\newblock Springer, 2006.

\bibitem[Rockafellar(1970)]{rock70}
R~Tyrrell Rockafellar.
\newblock \emph{Convex Analysis}.
\newblock Princeton University Press, 1970.

\bibitem[Rockafellar and Wets(2009)]{rock2009variational}
R~Tyrrell Rockafellar and Roger J-B Wets.
\newblock \emph{Variational analysis}, volume 317.
\newblock Springer Science \& Business Media, 2009.

\bibitem[Shi et~al.(2018)Shi, Shum, and Song]{shi2018estimating}
Xiaoxia Shi, Matthew Shum, and Wei Song.
\newblock Estimating semi-parametric panel multinomial choice models using
  cyclic monotonicity.
\newblock \emph{Econometrica}, 86\penalty0 (2):\penalty0 737--761, 2018.

\end{thebibliography}

\end{document}